\documentclass[conference]{IEEEtran}
\IEEEoverridecommandlockouts

\usepackage{cite}
\usepackage{amsmath,amssymb,amsfonts}
\usepackage{textcomp}
\usepackage{xcolor}
\usepackage{xurl}

\usepackage{amsthm}

\usepackage{algorithm}
\usepackage{algpseudocode}

\usepackage{graphicx}
\usepackage{booktabs}       
\usepackage{multirow}
\usepackage{makecell}       
\usepackage{threeparttable} 
\usepackage{graphbox}       

\usepackage{subcaption}
\DeclareCaptionLabelSeparator{periodspace}{.\quad}
\usepackage{pifont}         
\usepackage{wasysym}
\usepackage{circledsteps}
\usepackage{enumitem}       
\usepackage{fancybox}       
\usepackage[style=english]{csquotes}
\usepackage[hidelinks]{hyperref} 

\newtheorem{theorem}{Theorem}
\newtheorem{lemma}{Lemma}

\newcommand{\protocolname}{\textit{OrbitBFT}}

\def\BibTeX{{\rm B\kern-.05em{\sc i\kern-.025em b}\kern-.08em
    T\kern-.1667em\lower.7ex\hbox{E}\kern-.125emX}}

\begin{document}

\title{OrbitBFT: Enabling Scalable and Robust BFT Consensus in LEO Constellations

\thanks{
Corresponding author: Minghui~Xu (\href{mailto:mhxu@sdu.edu.cn}{mhxu@sdu.edu.cn})

This study was supported by the National Natural Science Foundation of China (No. 62302266, U25A20425, 62232010, U23A20302, U24A20244), the Research Project of Quancheng Laboratory, China under Grant No. QCL20250106, and 2025 Shandong Cyberspace Administration Reform Pilot Projects.
}
}

\author{
\IEEEauthorblockN{
Tianyi~Sun$^{\dag*}$, Shuo~Liu$^{\dag*}$, Minghui~Xu$^{\ddag\dag}$, Xiuzhen~Cheng$^{\dag}$
}
\IEEEauthorblockA{
$^\dag$ School of Computer Science and Technology, Shandong University\\
$^\ddag$ Quancheng Laboratory\\
*Both authors contributed equally to this work
}
}


\maketitle

\begin{abstract}
Low Earth Orbit (LEO) satellite constellations are evolving from communication relays into autonomous platforms operating in increasingly congested and contested environments. Since uplinks to ground stations can be severed or jammed, ensuring reliable coordination among satellites requires autonomous Byzantine Fault-Tolerant (BFT) consensus. However, applying conventional BFT protocols to LEO constellations is challenging due to their dynamic topology, sparse connectivity, and limited communication bandwidth. In this paper, we present OrbitBFT, a novel two-stage hierarchical BFT consensus protocol tailored to the unique characteristics of LEO constellations. First, OrbitBFT exploits the topological stability within orbital planes to partition the constellation and perform localized intra-plane consensus, which reduces communication overhead. Second, we design a Byzantine-resilient bypass mechanism and a hop-by-hop transmission protocol to ensure reliable message delivery and mitigate congestion, even in the presence of adversarial behavior. Third, we adapt and optimize PBFT and HotStuff to the LEO context, achieving linear message complexity while preserving safety and liveness. Extensive evaluations in a realistic Starlink-based simulation demonstrate that OrbitBFT significantly improves scalability, throughput, and latency compared to its original designs, making it a practical and efficient BFT solution for large-scale satellite networks.

\end{abstract}

\begin{IEEEkeywords}
LEO Constellation, Satellite, Byzantine Fault-Tolerant (BFT) Consensus.
\end{IEEEkeywords}

\section{Introduction}
\label{sec:introduction}

Low Earth Orbit (LEO) constellations are rapidly emerging as a key component of next-generation integrated space-ground communication infrastructures~\cite{westphal2023leo, reddy2023low, xiao2022leo}. Their inherent advantages, including near-global seamless coverage and reduced transmission delays compared to geostationary satellites, drive this trend~\cite{kodheli2020satellite}. Prominent examples like SpaceX's Starlink, which as of 2026 comprises over 9,000 satellites with plans for 12,000,
organized into multiple orbital planes, exemplify the rapid expansion and foundational role of LEO constellations. Other constellations like Iridium also demonstrate the viability and importance of this technology.
China has submitted two filings for massive satellite networks totaling nearly 200,000 satellites~\cite{jones2026china}, signaling further congestion in near-Earth space.
With advancements in satellite technology and onboard processing capabilities, LEO satellites are evolving from simple communication relays into autonomous network platforms capable of complex distributed coordination and decision-making~\cite{wu2023comprehensive,araguz2018applying}. This evolution enables advanced applications such as collaborative earth observation, real-time onboard data processing, and autonomous mission planning. A fundamental cornerstone for these applications is establishing and maintaining a reliable and consistent state among distributed satellite nodes to enable secure and efficient collaboration.


However, the expansion of LEO constellations introduces severe security and stability challenges. The orbital environment is becoming not only physically congested but also increasingly contested and complex. Collision risks are escalating, evidenced by near-miss incidents between Starlink satellites and the Chinese Space Station~\cite{cnn2021musk}. Furthermore, satellites now face intensifying threats ranging from signal jamming to cyber-hijacking and dogfighting maneuvers~\cite{mccarthy2025china, davenport2025dogfighting}. Under such attacks, a compromised satellite acts as a Byzantine node—an internal adversary that arbitrarily deviates from protocols or broadcasts falsified trajectory data to disrupt the network. Reliance on centralized ground stations is insufficient because uplinks can be severed or jammed in these high-stakes contexts~\cite{salem2026iran}. Consequently, autonomous coordination mechanisms are necessary. For instance, in autonomous collision avoidance, a compromised satellite might broadcast falsified trajectory data to cause collisions. Reliable satellites must reach a consensus on the correct maneuvering plan instantly, without waiting for ground instructions that may be jammed. Similarly, during constellation-wide reconfiguration, satellites must autonomously coordinate their slots to prevent disorder, ensuring resilience even if the central ground control is severed. In these scenarios, Byzantine Fault-Tolerant (BFT) consensus becomes a prerequisite, as it guarantees the constellation reaches a consistent agreement even despite internal subversion, ensuring mission safety.

Despite this urgent need, applying conventional BFT protocols to LEO constellations presents significant difficulties. 
While mature BFT consensus protocols like PBFT~\cite{castro1999practical} and HotStuff~\cite{yin2019hotstuff} have proven effective in the stable, resource-abundant environments of traditional ground-based networks, their inherent design principles clash with the realities of LEO constellations. 
This fundamental mismatch presents a hurdle:

\textit{How to Adapt BFT Consensus Protocols to LEO Constellations with Dynamic but Fragile Topology?}

LEO constellations exhibit a unique and dynamic network topology that remains largely unexplored in existing research on wireless BFT consensus~\cite{zou2024survey, guo2024hierarchical} and Internet routing. Within a LEO constellation, satellites in the same orbital plane maintain stable links, whereas inter-plane connections change rapidly due to orbital motion~\cite{wysocki2021performance}. This dynamic topology introduces challenges. 
First, frequent and complex dynamic routing is essential for inter-plane communication but imposes significant computational overhead~\cite{delportillo2019technical, fink2021constellation}. This burden is further exacerbated by the intensive message broadcasting required for consensus protocols, which strains the limited onboard processing capabilities.
Recent comprehensive reviews indicate that traditional Proof-of-Work (PoW) mechanisms are fundamentally impractical for satellite networks due to their excessive power requirements and inability to handle asynchronous links~\cite{probert2025review}.
Second, the large scale of these networks makes traditional BFT protocols of high message complexity impractical. These factors underscore the critical need for a new consensus architecture designed specifically for the LEO environment.

Another fundamental challenge lies in the sparse connectivity of LEO constellations. Theoretical research has established tight necessary conditions on the underlying graph connectivity required to solve Byzantine consensus in directed networks~\cite{tseng2015fault}. At any given time, a satellite typically connects to only four neighbors~\cite{topl}, which places the network near these critical topological limits. Consequently, communication paths are inherently fragile. A single Byzantine node can disrupt consensus, not through sophisticated attacks, but by selectively dropping or refusing to forward critical messages along key paths. Conventional BFT protocols often assume a well-connected network, where any pair of nodes can communicate reliably via dense P2P overlays, effectively abstracting away routing-level failures. As a result, few BFT protocols account for the constellation's link-layer dynamics. In LEO networks, however, the consensus protocol must incorporate mechanisms that ensure communication reliability in the presence of Byzantine behavior and physical link volatility.

This paper focuses on designing BFT consensus protocols for the challenging LEO satellite network environment. Addressing the unique characteristics of large-scale LEO constellations, we investigate novel BFT protocols optimized for dynamic and sparse topology. Our main contributions are:

\begin{enumerate}
    \item \textbf{Hierarchical BFT Architecture Tailored for LEO Constellations:} We design \protocolname{}, a novel two-stage consensus framework that decomposes the global consensus task into intra-plane and inter-plane phases. By exploiting the topological stability within orbital planes and limiting inter-plane coordination to a small committee, OrbitBFT significantly reduces bandwidth overhead and improves scalability in large LEO constellations.
    
    \item \textbf{Byzantine-Resilient Bypass Mechanism:} To counteract path fragility and malicious behavior, we propose a Byzantine-resilient Bypass Mechanism that dynamically reroutes messages via adjacent planes, and a congestion-aware hop-by-hop transmission protocol that ensures efficient and reliable communication within intra-plane rings.
    
    \item \textbf{Protocol-Level Adaptation and Empirical Validation of BFT:} We tailor PBFT and HotStuff to the LEO constellation's topology. These adaptations achieve linear message complexity, mitigate congestion, and preserve safety/liveness under constrained connectivity. Extensive simulations using realistic Starlink parameters validate the superiority of our optimized protocols over their native counterparts in terms of scalability, throughput, and latency.
\end{enumerate}

\begin{table*}[!t]
\centering
\begin{threeparttable}
\caption{A Comparison of BFT Consensus on Earth or in Space}
\label{table:comparision}
\tabcolsep=0.3cm
\begin{tabular}{l|c|c|c|c|c}
\toprule[1pt]
& \textbf{Target Network} & \textbf{Architecture} & \textbf{Scalability} & \textbf{\makecell[c]{Message\\Complexity}} & \textbf{\makecell[c]{Inter-plane\\Link Utilization$^\ddagger$}} \\
\midrule[0.5pt]
PBFT~\cite{castro1999practical}  & \multirow{3}{*}{\makecell[c]{Wired Network\\(Earth)}} & Distributed & \Circle & $O(N^2)$ & $O(k^2n^2)$ \\
HotStuff~\cite{yin2019hotstuff}  & & Distributed & \Circle & $O(N)$ & $O(k^2n^2)$ \\
Dynamic BFT~\cite{DynamicBFT}  & & Distributed  & \LEFTcircle & $O(N^2)$ & $O(k^2n^2)$ \\
\midrule
BLOWN~\cite{BLOWN} & \multirow{3}{*}{\makecell[c]{Wireless Network\\(Earth)}} & Distributed & \LEFTcircle & $O(N)$ & $O(kn)$ \\
TinyBFT~\cite{bohm2024tinybft}  & & Distributed & \Circle & $O(N^2)$ & $O(k^2n^2)$ \\
ReduceCatch~\cite{liu2025partially} & & Distributed & \LEFTcircle & $O(N\log N)$ & $O(k^2n^2)$ \\
\midrule
HPBFT~\cite{wang2024enabling}  & \multirow{2}{*}{Satellite Network} & Distributed & \LEFTcircle & $O(N\log N)$ & $O(kn)$ \\
SatBFT~\cite{zhang2025satbft}  & & Centralized & \Circle & $O(N^2)$ & $O(1)^*$ \\
\midrule
\textbf{OrbitBFT$^\dagger$} & Satellite Network & Distributed & \CIRCLE & $O(N)$ & $O(k)$ \\
\bottomrule[1pt]
\end{tabular}
\begin{tablenotes}

    \item[$\dagger$] OrbitBFT assumes a local fault tolerance threshold ($n \ge 3f_i+1$) per plane, whereas traditional BFT assumes a global threshold ($N \ge 3F+1$).
    \item[$\ddagger$] Frequency of inter-plane link usage per consensus instance. We use $N$ to denote the total number of satellites in the LEO constellation, $k$ to represent the number of orbital planes, and $n$ to represent the number of nodes per orbital plane, such that $N = k \times n$. The factored form is retained to highlight the specific impact of the orbital topology (planes vs. nodes) on link utilization.
    \item[$*$] SatBFT has $O(1)$ inter-plane utilization as it relies on high-orbit GEO satellites for global communication, thus not scaling with the number of LEO satellites.
\end{tablenotes}
\end{threeparttable}
\end{table*}


\section{Related Work} 
\label{sec:related_work} 

\subsection{BFT Consensus on Earth}
BFT consensus protocols on Earth can be categorized into those designed for wired and wireless networks. In wired networks, consensus protocols rely on a reliable link between any two nodes. PBFT~\cite{castro1999practical} is a groundbreaking BFT consensus that has influenced numerous subsequent BFT consensus protocols, including Tendermint~\cite{Tendermint} and HotStuff~\cite{yin2019hotstuff}. Notably, HotStuff is the first partially synchronous BFT consensus protocol to achieve both linearity and responsiveness. Subsequent research has acknowledged that wired network links may not always be ideal, and consensus nodes may dynamically enter and exit. To address these challenges, Duan \textit{et al.}~\cite{DynamicBFT} propose the Dynamic BFT protocol, a framework designed to manage dynamic node participation and departure, particularly in permissioned blockchain contexts.

In wireless networks, wireless channels are susceptible to environmental interference, preventing the establishment of reliable links between any two nodes, unlike wired networks. BFT consensus protocols in wireless networks employ various techniques, including Trusted Execution Environment (TEE)~\cite{TEE-1}, spanner~\cite{wChain}, proof-of-channel~\cite{BLOWN}, carrier sensing, and random timer, to handle collisions~\cite{802.11_PBFT}. TinyBFT~\cite{bohm2024tinybft} reduces PBFT's storage footprint, enabling its deployment on resource-constrained ESP32 devices. Furthermore, Liu \textit{et al.}~\cite{liu2024partially} propose ReduceCatch, a wireless communication protocol designed to facilitate the adaptation of partially synchronous BFT consensus protocols from wired to wireless environments.

\subsection{BFT Consensus in Space}
Research on BFT consensus in space is still in its nascent stage, evolving from specific application scenarios to fundamental infrastructure optimizations. Specific studies have focused on tailoring blockchain for specific satellite applications. For instance, a fault-tolerant consensus mechanism~\cite{mollakhani2025fault} was proposed for inter-operator spectrum sharing, enabling approximate agreement on noisy measurements among operators. Similarly, a reputation-based blockchain framework~\cite{shen2025privacy} was designed to secure federated learning and pricing in LEO IoT. Beyond specific applications, efforts have been made to optimize the underlying blockchain scalability. The Age-critical Blockchain Sharding (ABS) scheme~\cite{wang2022age} partitions the network based on geographic domains to optimize information freshness. However, a common limitation of these works is their reliance on ground stations or static terrestrial nodes for consensus execution. By treating satellites merely as relays or data collectors, these approaches bypass the complex challenges of on-orbit geometric dynamics and link instability.

To enable consensus directly on dynamic LEO infrastructures, researchers have attempted to adapt network architectures. To enable consensus directly on dynamic LEO infrastructures, researchers have proposed various architectural adaptations, primarily utilizing hierarchical structures. Hierarchical PBFT (HPBFT)~\cite{wang2024enabling} simplifies the topology by segmenting the constellation into clusters to reduce message complexity. Similarly, SatBFT~\cite{zhang2025satbft} adopts a multi-layer strategy, leveraging GEO satellites as stable relays to manage consensus messages. While these hierarchical designs improve scalability, they often introduce dependencies on specific cluster heads or external high-orbit infrastructures.

\subsection{Limitations of Existing Works}
As illustrated in Table~\ref{table:comparision}, current approaches have contributed to BFT consensus but still face several limitations when adapted to LEO constellations: 

1) Most BFT consensus protocols on Earth, such as PBFT and HotStuff, need reliable links between any two nodes. This requires establishing $O(N^2)$ reliable links and extensive inter-plane routing, which would consume a large volume of satellite resources and limit their scalability in LEO constellations. While BFT consensus~\cite{BLOWN, bohm2024tinybft, liu2025partially} in dynamic and wireless networks does not require reliable links between any two nodes, if deployed in LEO constellations, they would still require frequent invocation of inter-plane routing protocols, thus limiting scalability. 

2) Existing in-space consensus algorithms have made some improvements for scalability within LEO constellations, yet they still have limitations in large-scale LEO constellations.

For instance, HPBFT~\cite{wang2024enabling} segments the constellation into clusters to reduce complexity. However, nodes within the same consensus cluster often span multiple adjacent orbital planes. This creates a dependency on unstable inter-plane links for intra-cluster consensus, maintaining high routing costs.
SatBFT~\cite{zhang2025satbft} utilizes GEO satellites as relays to facilitate LEO consensus. While this reduces LEO-to-LEO message complexity, it introduces a reliance on high-orbit infrastructure. This hierarchical dependency creates a central point of failure and bottleneck at the GEO relay nodes, contradicting the decentralized requirements of robust mega-constellations.
Therefore, existing work has not been able to provide good scalability in large-scale LEO constellations. The main reason for this is excessive inter-plane routing calls, which consume lots of resources and, in turn, limit scalability.

\section{Preliminaries and Models}
\label{sec:system_model}

\textbf{Partially Synchronous BFT Consensus.}
Partially synchronous BFT consensus protocols use the partially synchronous network model to overcome the FLP impossibility~\cite{FLP}. They guarantee safety and liveness even with up to $f$ Byzantine nodes, where $N=3f+1$ and $N$ is the network size. The goal of this paper is to efficiently adapt partially synchronous BFT consensus to satellite networks, with the core idea being cross-layer optimization from the network perspective. Therefore, the underlying communication patterns of partially synchronous BFT consensus are crucial. For completeness, we select two well-established protocols, PBFT~\cite{castro1999practical} and HotStuff~\cite{yin2019hotstuff}, as they cover all communication patterns of partially synchronous BFT consensus: 1-to-N (one-to-all), N-to-1 (all-to-one), and N-to-N (all-to-all).

\begin{figure}[htbp]
\centering
\begin{subfigure}[b]{0.48\columnwidth}
    \includegraphics[width=\textwidth]{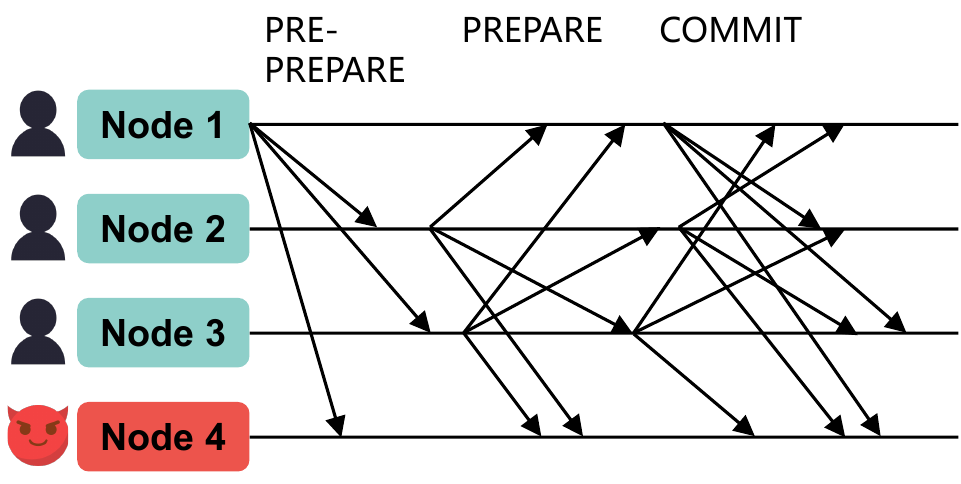}
    \caption{PBFT}
    \label{fig:PBFT}
\end{subfigure}
\hfill
\begin{subfigure}[b]{0.48\columnwidth}
    \includegraphics[width=\textwidth]{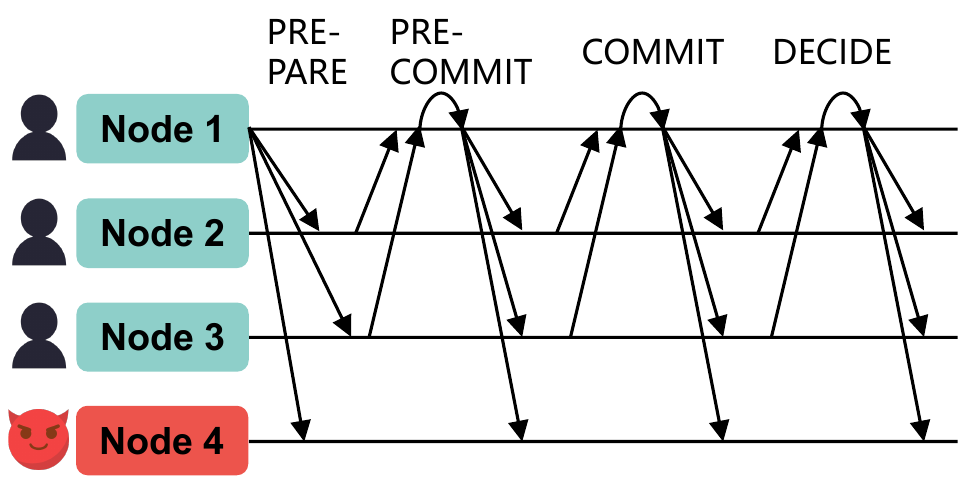}
    \caption{HotStuff}
    \label{fig:HotStuff}
\end{subfigure}
\caption{Partially synchronous BFT consensus}
\label{fig:consensus}
\end{figure}

In this paper, we consider the normal case operation of partially synchronous BFT consensus. As shown in Fig.~\ref{fig:PBFT}, PBFT consists of three phases. The leader first uses a 1-to-N communication to send its proposal to other followers. Then, two phases of N-to-N communication for voting ultimately determine whether the proposal is committed. As shown in Fig.~\ref{fig:HotStuff}, HotStuff consists of four phases. Similar to PBFT, the leader first sends the proposal to other followers via a 1-to-N communication. Unlike PBFT, HotStuff uses threshold signatures in the subsequent three phases. In each of these subsequent phases, the leader first collects votes from followers and aggregates them into a threshold signature, which is then broadcast to other followers. Therefore, the latter three phases utilize both 1-to-N and N-to-1 communication.

\textbf{Network Model.}
\label{sec:network_model}
We consider a large-scale LEO constellation, defined by a set $V$ of $N$ satellites. These satellites are configured across $k$ near-circular orbital planes, with each plane consisting of $n$ satellites ($N = k \times n$). We assume that the satellites within the same orbital plane are distributed uniformly. 
The LEO constellation can be modeled as a spherical mesh in 3D coordinate space.
Every satellite is assigned a distinct identifier and is initialized with knowledge of both the plane size and the total constellation size. Furthermore, we adopt a partially synchronous network model~\cite{dwork1988consensus}. Each orbital plane alternates between asynchronous and synchronous states. During an asynchronous state, a Global Stabilization Time (GST) is established, after which the network becomes synchronous. This assumption of partial synchrony is justified for an orbital plane due to the inherent susceptibility of the inter-satellite link (ISL) to cosmic radiation and other environmental influences. 

\begin{figure}[!htb]
    \centering
    \includegraphics[width=0.9\linewidth]{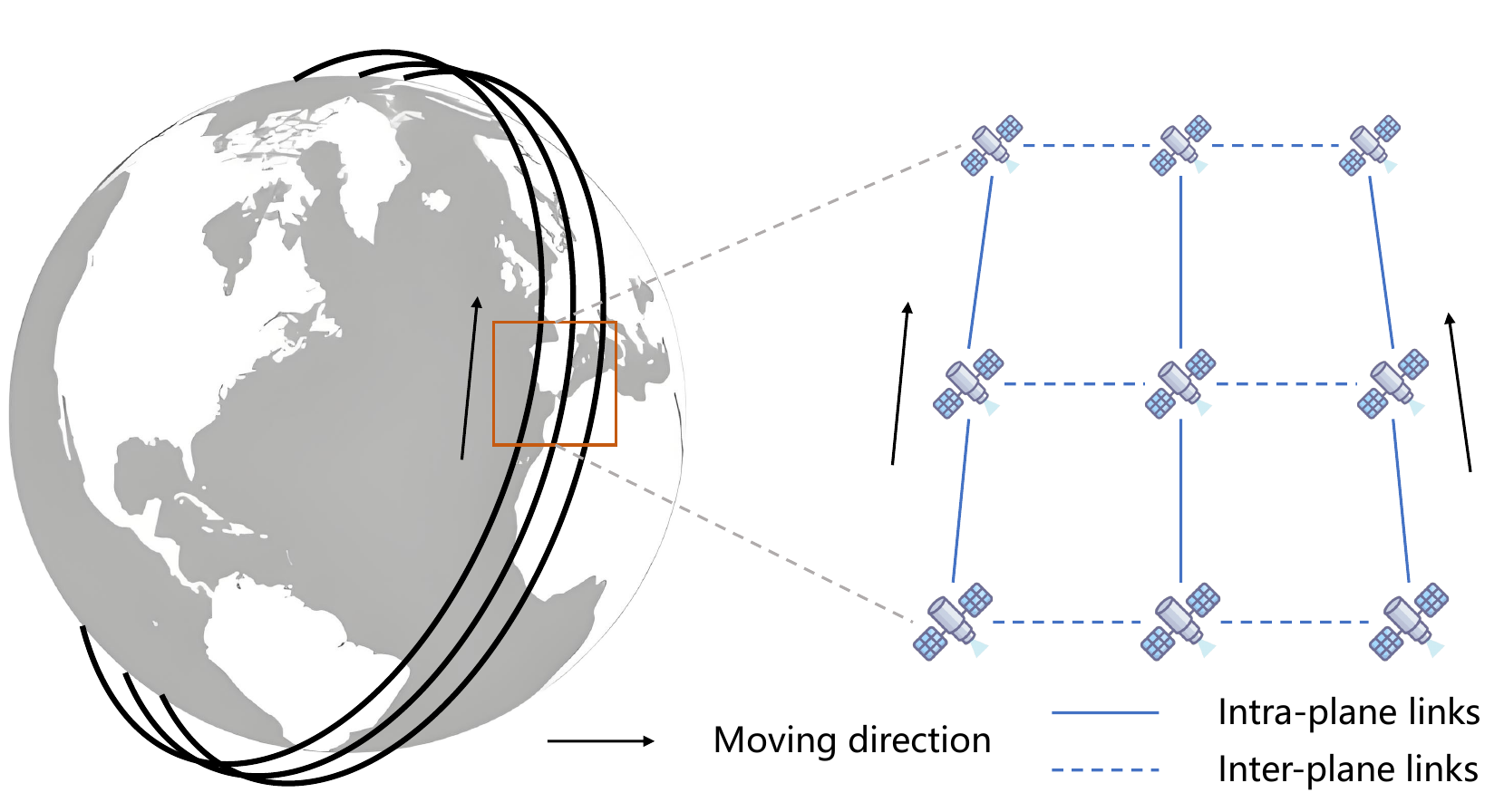}
    \caption{LEO constellation. The black lines on the left side of the figure represent the satellite's orbits.}
    \label{fig:LEO}
    
\end{figure}

As shown in Fig.~\ref{fig:LEO}, our LEO constellation uses two types of ISLs with different properties. Intra-plane ISLs connect nearby satellites in the same orbital plane; these links are very stable and form $k$ unchanging ring-like structures. In contrast, Inter-plane ISLs connect satellites in different orbital planes. These connections are dynamic, constantly forming and breaking as satellites move to or leave other satellites. Each satellite in a plane maintains at most four ISLs: two intra-plane links with its neighbors in the same plane and two inter-plane links with satellites in adjacent orbital planes. Our connectivity model aligns with practical designs of modern LEO constellations and simplifies the network model for BFT consensus protocol analysis~\cite{chaudhry2022temporary}.

\textbf{Adversary Model.}
Among a total of $N$ nodes, we assume there are at most $F$ Byzantine nodes, which can behave arbitrarily by deviating from the protocol, such as sending falsified messages, withholding messages, or colluding with other malicious nodes. The nodes are partitioned into $k$ consensus groups, each with $n$ nodes, where each group has at most $f_i$ Byzantine nodes, with $n=3f_i+1$. Therefore, $F=f_1+f_2+...+f_k$. 
We assume that Byzantine nodes do not form a physically isolated closure. A closed-off group of nodes would be cut off from the network, making it impossible for any protocol to ensure reliable communication. We assume such permanent encirclement is probabilistically negligible due to the high-velocity relative motion of satellites, which naturally resolves physical isolation over time.

\section{\protocolname{}: BFT Consensus for LEO Constellations}
\label{sec:architecture}

\subsection{Motivation}
\label{sec:motivation}

Our work is fundamentally motivated by a crucial finding within LEO satellite networks: while the global network topology exhibits high dynamism due to the motion of satellites and the temporary nature of ISLs~\cite{topl}, complicating real-time dynamic routing~\cite{han2022time, zhang2024adaptive}, satellites maintain stable relative positions, forming a ring topology within each orbital plane. These satellites in same orbital plane are interconnected via persistent and highly reliable ISLs, enabling simple and stable routing. In contrast to the static connectivity of traditional Internet or fully randomized topologies, this local stability where global dynamism coexists with invariant local structures, is a powerful and underexploited feature.

Recognizing the inherent stability in LEO networks, we propose \protocolname{}, a two-stage consensus protocol. Our core idea is to simplify the complex global consensus problem by decomposing it into two distinct subproblems: intra-plane consensus and inter-plane consensus. Intra-plane consensus ensures consensus among satellites within the same orbital plane. Inter-plane consensus takes responsibility for globally ordering the proposals generated by these intra-plane consensus processes. By localizing consensus efforts within planes before aggregating them globally, \protocolname{} reduces the communication burden and computational complexity typically associated with achieving network-wide agreement in large-scale satellite systems.

\subsection{Two-Stage \protocolname{}}

\protocolname{} partitions the entire constellation network into multiple parallel consensus groups based on their physical orbital planes. Each orbital plane constitutes an intra-plane consensus group, coordinated by a deterministically elected plane leader (PL). For the inter-plane stage, a global leader (GL) is elected to finalize the global agreement. Nodes in the network thus assume one of three roles: a regular node within a plane, known as a Replica; a PL; and a GL, who also serves as the PL for its own plane. The complete consensus process in \protocolname{} unfolds in two core stages, as shown in algorithm~\ref{alg:orbitbft_consensus}.

\textbf{Intra-plane consensus.}
During this stage, each PL kicks off a standard BFT consensus process within its designated group. If a group successfully reaches a consensus, it gets a proposal that has been certified by the signatures of a quorum of nodes in its orbital plane.

\textbf{Inter-plane consensus.}
After successfully generating a proposal in the first stage, a GL is elected. All PLs then send their proposals to the elected GL. The GL then combines all the collected proposals into an ordered sequence using a deterministic function. Finally, the GL initiates a final BFT consensus instance on this combined sequence committee of PLs (deterministically selected via a pseudo-random function seeded by the ViewID, e.g., of size $k$). If this global consensus succeeds, all PLs broadcast the final sequence to the replicas in their respective groups, committing the state. If it does not succeed, the system moves to a new consensus epoch, which includes re-electing a leader.

Our hierarchical architecture creates a trust boundary where Replicas rely on their PL for inter-plane communication. While a Byzantine PL could temporarily impact local consistency, this is a necessary trade-off for linear scalability ($O(N)$). The protocol mitigates this risk through the View Change mechanism and Rotational Leadership, ensuring that any such malicious influence is transient and confined to a single plane, preventing global system paralysis.

\begin{algorithm}[!t]
\caption{Two-Stage Consensus Protocol in \protocolname{}}
\label{alg:orbitbft_consensus}
\begin{algorithmic}[1]
\State \textbf{Stage 1: Intra-plane Consensus}
\For {each $PL$}
    \State $PL$ initiates BFT consensus
    \If {$PL$ collects a quorum of signatures}
        \State partial\_certificate $\gets$ \Call{GeneratePartialCertificate}{}
    \EndIf
\EndFor

\State 
\State \textbf{Stage 2: Inter-plane Consensus}
\State  $GL \gets$ \Call{SelectGlobalLeader}{}
\State $PLs$ send proposals to $GL$ \Comment{Parallel \& Async}
\State $GL$ aggregates proposals (wait for quorum or timeout)
\State $committee \gets$ \Call{Select committee}{$PLs, N_c$}
\State $GL$ initiates BFT consensus on sequence with 
\State $committee$ broadcast result to all $PLs$
\textit{committee}
\If {consensus with $committee$ succeeds}
    \For {each $PL$}
        \State $PL$ broadcasts final sequence to replicas
    \EndFor
\Else
    \State Proceed to new consensus round
\EndIf
\end{algorithmic}
\end{algorithm}

\subsection{Intra-Plane Consensus}

Traditional BFT consensus protocols, when directly applied to LEO constellations, encounter security and efficiency challenges. BFT consensus typically relies on communication patterns such as 1-to-N, N-to-1, and N-to-N. On the internet, particularly in peer-to-peer (P2P) networks, the high degree of network connectivity allows nodes to freely broadcast proposals or votes to a large number of peers through gossiping mechanisms.
In stark contrast, LEO constellations possess a more rigid, mesh-like topology, where a satellite is typically connected to only four adjacent satellites at any given time (two within the same orbital plane and two in adjacent planes). This constrained connectivity presents a critical problem: if a Byzantine node is adjacent to an honest node and attempts to disrupt the ISL, the honest node has only three alternative paths to bypass the malicious entity. Furthermore, this limited connectivity inherently exacerbates network congestion, as each satellite has only four available links for message transmission, restricting its capacity to handle large volumes of consensus-related traffic.

\subsubsection{Bypass Mechanism}
\label{sec:bypass}

In a ring topology, a Byzantine node can disrupt message delivery along a communication path. To address this, we introduce a Bypass Mechanism. This mechanism is designed to guarantee reliable message delivery within our spherical mesh topology, even when Byzantine nodes are present. Our protocol's primary mode of communication is within the same orbital plane. To illustrate the Bypass Mechanism, we'll use a basic example: a node $X$ sending a message to another node $Y$ within the same plane. This instance illustrates the general mechanism. The mechanism is illustrated in three procedures. 

\textbf{Acknowledgment Confirmation.} Our mechanism first requires the receiving node to send an acknowledgment (ACK) back to the sender to confirm delivery. When node $X$ sends a message to node $Y$, it starts a timer and expects an ACK to arrive within a predefined time window. If this ACK is not received, $X$ cannot confirm that the message was successfully delivered.

\textbf{Byzantine Node Detection and Bypass.} When node $X$ does not receive an expected ACK from node $Y$, it suspects a Byzantine fault along the path but can not determine the exact location. This triggers a bypass. $X$ sends the message to its neighbors in both adjacent orbital planes. For simplicity, let us focus on the path through a node we call $X'$, as shown in Fig.~\ref{fig:toy_example}. This reactive behavior is a core rule of the protocol. If any node responsible for forwarding the message fails to receive an ACK, it concludes that its current path is compromised and initiates its own bypass by contacting a neighbor in an adjacent plane. If this bypass fails (i.e., node $X$ does not get a response from $X'$), $X$ contacts its predecessor, $X-1$. The predecessor then starts a wider bypass designed to circumvent the unresponsive neighbor's area. This process of requesting help from a predecessor continues recursively. However, to ensure bounded latency, if the backtracking depth reaches the plane size $n$ (implying a traversal of the entire ring without success), the node concludes that the local topology violates the connectivity assumption (i.e., a closure has formed) and triggers a view change.

\begin{figure}[!t]
\centering
\begin{subfigure}[b]{0.48\columnwidth}
    \includegraphics[width=\textwidth]{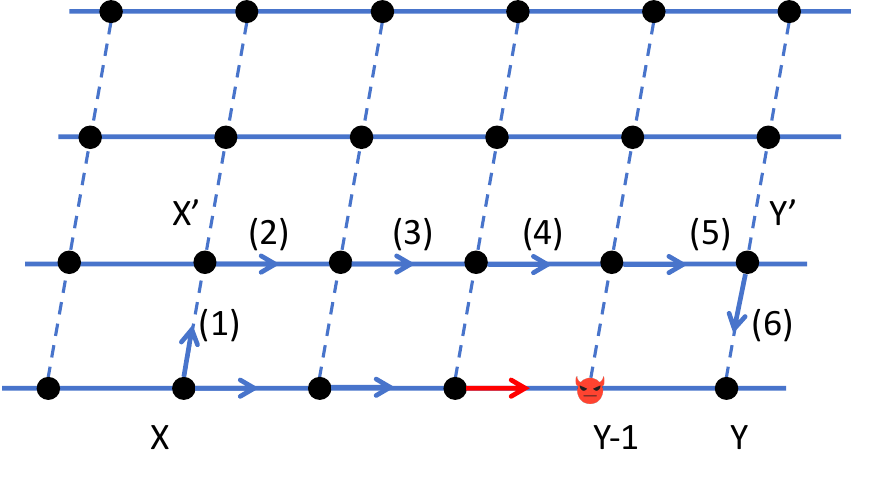}
    \caption{A toy example}
    \label{fig:toy_example}
\end{subfigure}
\hfill
\begin{subfigure}[b]{0.48\columnwidth}
    \includegraphics[width=\textwidth]{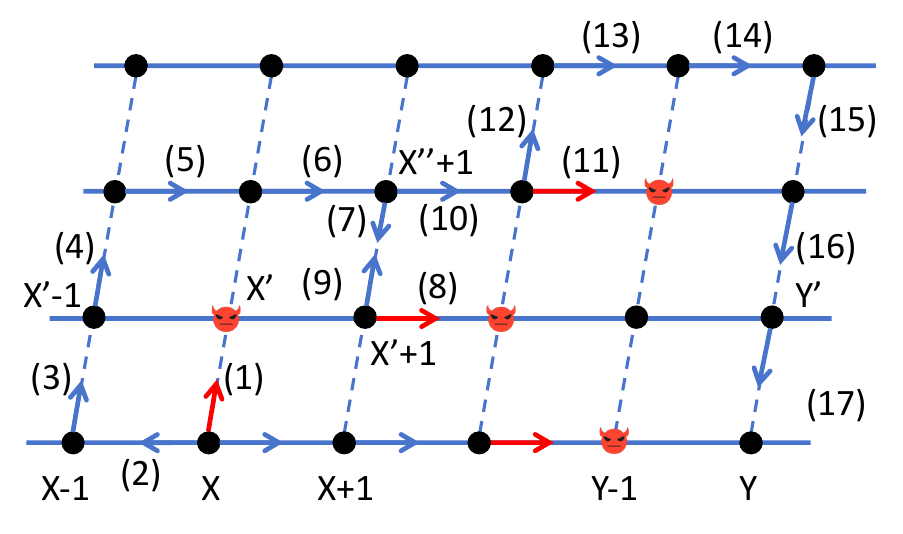}
    \caption{A complex example}
    \label{fig:complex_example}
\end{subfigure}
\caption{When $X$ fails to receive an ACK from node $Y$, messages are passed in numerical order, with red indicating failed transmissions.}
\label{fig:example}
\end{figure}

\textbf{Path Segmentation and Responsibility.} The act of initiating a bypass by having node $X$ find a cross-plane link to send its message naturally divides the path into several segments. The example shown in Fig.~\ref{fig:toy_example} has three segments: $X \to X'$, $X' \to Y'$, and $Y' \to Y$. The node at the start of each segment is responsible for delivering the message to the start of the next segment. This responsibility is only fulfilled upon receiving an ACK from the original target $Y$. If a segment's starting node is unable to receive an ACK from $Y$, it will recursively invoke the Bypass Mechanism to find a new path that circumvents the Byzantine node.

Through these procedures, a complete Bypass Mechanism is constructed, ensuring that a reliable path can always be found. This process may result in multiple concurrent bypass attempts. The first attempt that successfully returns the final ACK from $Y$ completes the message passing. Fig.~\ref{fig:complex_example} illustrates a more complex bypass condition where there are multiple Byzantine nodes along the path besides $Y$. After $X$ initiates the first bypass, it fails to transmit a message because $X'$ is a Byzantine node. It then requests assistance from its predecessor, $X-1$, which starts the second bypass. This process may repeat until, in this example, the message reaches $Y$.

\subsubsection{Hop-by-hop Transmission}
To address network congestion on an orbit, we implemented a hop-by-hop transmission protocol instead of traditional broadcast mechanisms. The protocol is designed for the specific topology of an orbital plane where nodes form a ring. Since consensus messages are identical for all participating nodes, a PL initiates a message once, sending it in both directions along the orbital plane. The message is forwarded and processed by each node until it reaches the node symmetrically opposite the PL on the ring. 
To protect against Byzantine nodes that might intercept messages or collude to send acknowledgments without propagating the message, each node along the path is required to send the message to its next $f_i+1$ successors and receive an ACK. This ensures that at least one honest node receives and continues to propagate the message, completing the ``relay race'' and guaranteeing message propagation across the orbit.

Based on this hop-by-hop transmission, BFT protocols such as PBFT and HotStuff can be efficiently adapted to LEO constellations. For PBFT, the PL broadcasts a \texttt{PRE-\texttt{PREPARE}} message hop-by-hop. Replica nodes, upon receiving it, proceed with \texttt{PREPARE} and \texttt{COMMIT} phases via hop-by-hop transmission, collecting $2f+1$ \texttt{COMMIT} messages to complete intra-plane consensus. For HotStuff, the PL broadcasts proposals for \texttt{PREPARE}, \texttt{PRE-COMMIT}, and \texttt{COMMIT} phases hop-by-hop. Other nodes validate proposals and return signed votes, collecting $2f+1$ \texttt{COMMIT} votes to form a Quorum Certificate, completing intra-plane consensus.

\subsubsection{Congestion-Aware Flow Control} 
Although the hop-by-hop mechanism serializes message transmission to avoid collisions, an unbounded number of concurrent consensus instances can still accumulate to exceed the link capacity. To ensure system stability under high loads, \protocolname{} incorporates a sliding window mechanism. Each PL maintains a congestion window of size $W$, limiting the number of active uncommitted proposals in the pipeline. New consensus instances are only initiated when the number of pending instances falls below $W$.

\subsection{Inter-Plane Consensus}
\label{sec:inter_consensus}

The global consensus procedure builds upon the reliable inter-PL communication mechanism. 
While the intra-plane stage generates local certificates, the inter-plane stage is responsible for ordering these batches globally and ensuring state consistency across the constellation.

\subsubsection{Global Leader Election and Committee}
To coordinate the inter-plane phase, a GL is elected deterministically from the set of PLs. 
The election follows a rotation-based schedule coupled with the view number.
Considering the scalability, the GL does not necessarily run BFT among all $k$ PLs if $k$ is extremely large. 
Instead, a committee of size $N_c$ (where $N_c \le k$) can be selected. 
To ensure scalability independent of the constellation size ($k$), we set the committee size to a small constant denoted as $N_c$ (where $N_c \ll k$). The committee members are deterministically selected from the $k$ PLs based on the current view number.
We define $f_g$ as the maximum number of Byzantine nodes tolerated within this inter-plane committee, satisfying $k \ge 3f_g + 1$.

\subsubsection{Global Aggregation and Consensus}
The inter-plane consensus operates as follows:
Upon election, the GL waits for a predefined time window $\Delta_{collect}$ to receive proposals and partial certificates from all PLs. 
To ensure liveness, the GL does not wait for all $k$ PLs. 
It proceeds as long as it receives valid inputs from at least $2f_g+1$ PLs, or simply packs whatever it has received when $\Delta_{collect}$ expires.
The GL aggregates the collected valid proposals into a global block and initiates a BFT consensus instance (e.g., PBFT or HotStuff) within the committee of PLs. 
Communication during this phase utilizes the reliable inter-plane routing described in Section~\ref{sec:bypass}.
Once the global consensus is reached (i.e., a PL receives $2f_g+1$ commit signatures), the PL considers the global block committed. 
Finally, each PL broadcasts this finalized global block down to all Replicas in its own orbital plane via the intra-plane ring, ensuring global state consistency across all satellites.

    
    

\subsection{Leader Election and View Change}
\label{sec:view_change}

To ensure liveness and handle Byzantine failures—whether they are faulty nodes or leaders—\protocolname{} adopts a deterministic view-change protocol inspired by PBFT~\cite{castro1999practical}.

\subsubsection{Deterministic Leader Selection}
Time is divided into views, denoted by an integer $v$. In each view, a unique leader is determined deterministically, avoiding the overhead of complex election algorithms.
For an orbital plane with $n$ nodes indexed $0$ to $n-1$, the PL for view $v$ is the node with ID $p = v \pmod n$. 
Similarly, for the inter-plane committee with $k$ PLs, the GL is the PL with index $g = v \pmod k$.
This round-robin rotation ensures that if a leader is a Byzantine node, the protocol will eventually rotate to an honest leader within at most $f_i+1$ view changes.




\subsubsection{Exception Handling via View Change}
A view change is triggered to advance $v$ to $v+1$ when progress stalls. This guarantees the liveness of the system.
If a node (Replica or PL) sends a transaction proposal but does not receive a consensus result within a timeout $T_{out}$, it broadcasts a \texttt{VIEW-CHANGE} message. Once a node collects $2f+1$ \texttt{VIEW-CHANGE} messages, it enters the new view $v+1$.
To address the potential for unbounded latency in the recursive Bypass Mechanism, we enforce a strict limit on recursion depth (e.g., equal to the plane size $n$). If a valid path cannot be found within this limit—implying a network partition or excessive Byzantine nodes forming a closure—the node immediately aborts routing and triggers a \texttt{VIEW-CHANGE}. This ensures that routing failures do not cause indefinite waiting, satisfying the liveness requirement.



\section{Analysis}
\label{sec:analysis}

This section theoretically proves the correctness of the \protocolname{} protocol, demonstrating that it satisfies the essential properties of safety and liveness required for distributed consensus. We also provide a complexity analysis to quantify its performance advantages.

\subsection{Safety and Liveness of the \protocolname{} Protocol}

\begin{theorem}[Safety and Liveness of \protocolname{}]
The \protocolname{} protocol ensures both safety and liveness. Safety guarantees that all honest nodes agree on the same sequence of valid transactions. Liveness guarantees that any valid transaction proposed by an honest node is eventually committed by all honest nodes. 
\end{theorem}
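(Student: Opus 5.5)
We argue by composition: \protocolname{} is a two-level application of a standard partially synchronous BFT protocol (PBFT or HotStuff), so we reduce both properties to the known guarantees of that protocol. Those guarantees hold once two conditions are met: the fault threshold holds inside each group, and the communication layer gives eventual reliable point-to-point delivery after GST. Accordingly, the proof has three parts. The first is a communication lemma for the Bypass Mechanism and hop-by-hop transmission. The second invokes the safety and liveness of the underlying protocol inside each plane and inside the inter-plane committee. The third is a composition argument across the two stages and across views.

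\textbf{Step 1 (reliable delivery).} We first show that, after GST, any message sent by an honest node to an honest node in its plane is eventually delivered. For hop-by-hop transmission, each forwarder sends to its next $f_i+1$ successors. Since a plane contains at most $f_i$ Byzantine nodes, at least one recipient is honest and keeps relaying. An induction on ring position then shows that the message reaches every honest node up to the antipodal point, in both directions. For point-to-point delivery, we use the no-closure assumption of the adversary model: the honest nodes together with the honest inter-plane neighbours form a connected subgraph. The recursive bypass explores alternative segments, and each segment start remains responsible until $Y$'s ACK returns. So the first honest path found yields delivery within bounded depth, and the depth bound $n$ is never reached under the assumption. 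Integrity follows because messages are signed, so relays can drop messages but cannot forge them. I expect this step to be the main obstacle. The bypass description is informal, and termination and completeness need an explicit invariant. I would first try a BFS-style argument over the honest connected component, using the fact that backtracking to predecessors eventually covers every honest boundary node.

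\textbf{Step 2 (per-group guarantees).} Step 1 supplies the channel model that PBFT and HotStuff require. Each plane has $n=3f_i+1$, and the committee satisfies $3f_g+1$ over its members. Any two quorums of size $2f+1$ therefore intersect in an honest node, which gives intra-plane safety: at most one certified proposal per sequence slot and view. The same quorum argument gives agreement on a single global block per height in the committee. Liveness inside each group follows from the round-robin view change. After GST, within $f_i+1$ (respectively $f_g+1$) views an honest leader is reached, and it drives the normal case to completion before $T_{out}$.

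\textbf{Step 3 (composition).} For global safety, every honest node adopts the global block certified by $2f_g+1$ committee signatures. This block is unique per height by Step 2. Its contents are plane proposals, each carrying a partial certificate that proves validity. Honest replicas accept a block from their PL only if it carries this global certificate. A Byzantine PL can therefore delay a block but cannot substitute another one, and all honest nodes apply the same ordered sequence via the deterministic ordering function. For liveness, consider a transaction from an honest node. It is certified by its plane after an honest PL appears (Step 2). It then reaches the GL, which collects inputs from at least $2f_g+1$ PLs or waits for $\Delta_{collect}$. The committee commits it once an honest GL is in view. Finally it is disseminated by hop-by-hop transmission (Step 1). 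If any stage stalls, a timeout triggers a view change, and rotation guarantees that an honest plane leader and an honest global leader eventually coincide after GST.
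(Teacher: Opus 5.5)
Your skeleton is the same as the paper's. The paper also proves a reliability lemma for the Bypass Mechanism from $n=3f_i+1$ and the no-closure assumption. It splits Byzantine interference into timeouts, answered by adjacent-plane and predecessor bypasses, and deceptive forwarding, answered by end-to-end ACKs. It then appeals to the safety of PBFT/HotStuff on reliable channels, and to view change for liveness. Your Steps 1--2 make this more explicit. The certificate check at replicas in Step 3 is an ingredient the paper lacks; the paper even concedes that a Byzantine PL ``could temporarily impact local consistency.'' So present it as an added protocol rule. Your doubt about the depth bound $n$ in Step 1 is well founded, since the paper gives no invariant there either.

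The step that fails is the leader-rotation claim: that an honest GL appears within $f_g+1$ views (Step 2), and that an honest PL and an honest GL eventually coincide (Step 3). Under the paper's rules, every plane's PL in view $v$ is node $v \bmod n$, and the GL is the PL of plane $g=v \bmod k$. So the GL is always a node of plane $g$ whose index is congruent to $g$ modulo $d=\gcd(n,k)$. If $n/d \le f_i$, the adversary can corrupt exactly those indices in every plane. For example, take $n=22$ and $k=66$, and corrupt node $g \bmod 22$ of each plane $g$. That is one fault per plane, at most three Byzantine PLs per view, and no closure. Yet the GL of every view is Byzantine, so nothing is ever globally committed. Rotation alone therefore does not give liveness. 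You need an explicit hypothesis on $n$, $k$ and fault placement, or a GL schedule decoupled from the PL index.

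Relatedly, state the committee bound you use in Step 2 (uniqueness of the certified global block) as an explicit per-view assumption. The paper does so only implicitly through $f_g$, and writes it with $k$ rather than the committee size. The bound does not follow from the per-plane thresholds. Because all planes rotate in lockstep, corrupting node $0$ of every plane makes every PL, and hence the whole committee, Byzantine in each view $v\equiv 0 \pmod n$. Such a committee can certify conflicting blocks that your replica-side certificate check would accept. The paper's ``faulty leaders are eventually replaced'' glosses over exactly this point, so it cannot be cited to close the gap.
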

\begin{proof}
These properties hold given that the underlying BFT protocol is also safe and live. The Safety and Liveness of \protocolname{} hinges on the reliability of its communication layer, which must ensure that messages between honest nodes are not lost or corrupted. We establish the reliability of our communication primitive with the following lemma.

\begin{lemma}[Reliability of the Bypass Mechanism]
\label{lem:bypass_reliability}
The Bypass Mechanism guarantees reliable message delivery between any two honest nodes.
\end{lemma}
\begin{proof}
The proof relies on two fundamental properties from our adversary model: (1) each orbital plane satisfies $n = 3f_i + 1$, and (2) Byzantine nodes do not form a physically isolated closure. To prove reliability, we must show that the mechanism can overcome all possible disruptive behaviors from Byzantine nodes. These behaviors can be categorized into two fundamental types. First, a Byzantine node can directly disrupt a path by refusing to forward a message or its acknowledgment, causing a timeout. Second, a Byzantine node can act deceptively by correctly forwarding a message within a local segment, only to intercept the final acknowledgment from the ultimate destination. The Bypass Mechanism addresses both failure types.

If a message's direct path is disrupted, a node will first attempt a delivery within its own orbital plane. If this fails, it initiates a bypass to an adjacent plane, or its predecessor for a wider bypass. This deterministic search ensures that as long as an honest node is not completely isolated, a viable path to another honest node will eventually be quickly found. To prevent deceptive forwarding, a node that initiates a path segment is not considered finished until it receives an end-to-end ACK from the final destination. By combining a deterministic path exploration strategy with an end-to-end confirmation logic, the Bypass Mechanism systematically addresses all possible ways a Byzantine node can disrupt communication, thus ensuring reliable message delivery.
\end{proof}

With the reliability of the communication channel established by Lemma~\ref{lem:bypass_reliability}, we can now prove the two properties for the overall protocol. (1) \textit{Safety:} With Lemma~\ref{lem:bypass_reliability} ensuring a reliable communication channel, both the intra-plane and inter-plane consensus stages operate without the loss or corruption of messages between honest parties. Since the underlying BFT protocols are assumed to be safe, they will function correctly on this reliable foundation, thus ensuring the overall safety of the \protocolname{} protocol. (2) \textit{Liveness:} The protocol's liveness depends on two conditions: reliable message delivery and a mechanism to handle faulty leaders. The first condition is guaranteed by Lemma~\ref{lem:bypass_reliability}, which ensures that messages from honest nodes are not indefinitely lost. The second condition is handled by the standard view-change mechanism inherent in BFT protocols, which replaces faulty leaders. Since messages are not lost and faulty leaders are eventually replaced, the protocol will continue to make progress, thus ensuring it is live.
\end{proof}

\subsection{Complexity Analysis}

In this section, we analyze the message complexity of \protocolname{}, defined as the number of point-to-point message transmissions required for one successful consensus instance in a no-fault scenario. Let $N$ be the total number of nodes, $k$ be the number of orbital planes, and $n = N/k$ be the number of nodes per plane.

\begin{theorem}[Message Complexity of \protocolname{}]
The message complexity of \protocolname{} is $O(N)$ when using either PBFT or HotStuff.
\end{theorem}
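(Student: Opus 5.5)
The message count for one successful instance in the fault-free case splits additively into three parts: the intra-plane stage summed over all $k$ planes, the inter-plane stage among the committee, and the final dissemination from each PL to its replicas. I will show each part is $O(N)$ for both PBFT and HotStuff and then add them. Because there are no faults, the Bypass Mechanism is never triggered. Every transmission is therefore either an ordinary hop-by-hop forward along a ring or its ACK, and ACKs at most double the count.

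\textbf{Intra-plane stage.} In the hop-by-hop primitive, a 1-to-$n$ dissemination from the PL travels both ways around the ring and stops at the antipodal node. Each node forwards to its next $f_i+1$ successors and collects ACKs, so this costs $O(n(f_i+1))$ transmissions. For the linear bound I will use the no-fault reading: the redundancy to $f_i+1$ successors is a safeguard, and with all nodes honest each ring edge carries the message once. A single dissemination then costs $O(n)$. An $n$-to-1 vote collection is handled the same way. Votes travel along the ring toward the PL and are aggregated or piggybacked at each hop, which is natural in HotStuff because votes are combined into a threshold signature. Each ring edge carries one message per phase, giving $O(n)$.

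The $n$-to-$n$ phases of PBFT are the main obstacle. Naively, each node's PREPARE/COMMIT circulates the whole ring, which gives $O(n^2)$. My first approach is to treat an $n$-to-$n$ phase as a single circulation of a growing bundle. Each node appends its signed vote to the identical phase message as it passes through, so one traversal of each ring edge (per direction) delivers every vote to every node. That is $O(n)$ transmissions per phase, although message size grows. I will state explicitly that complexity counts point-to-point transmissions, not bits, which matches the definition given at the start of this subsection. Both PBFT and HotStuff have a constant number of phases, so one plane costs $O(n)$ and the $k$ planes together cost $O(kn)=O(N)$.

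\textbf{Inter-plane stage and dissemination.} Sending the $k$ PL proposals to the GL takes $k$ routed messages. Each route crosses at most $O(k)$ inter-plane hops in the mesh, and I will also count its intra-plane hops, bounded by $O(n)$. The committee of constant size $N_c$ runs one BFT instance with $O(N_c^2)$ messages, each routed over $O(k+n)$ hops, giving $O(k+n)$ in total. The final broadcast from every PL to its replicas is one more hop-by-hop dissemination per plane, so $O(N)$ overall. Summing the three parts gives $O(N)+O(k(k+n))+O(k+n)+O(N)$. The term $k\cdot k$ is $O(N)$ only when $k\le n$, which holds for realistic constellations. If that assumption is unavailable, I would instead count only the inter-plane links used, which is $O(k)$ as in Table~\ref{table:comparision}, and note that the routes follow neighbouring planes. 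Either way the total is $O(N)$.
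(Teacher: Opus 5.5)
There is a genuine gap. You never use the assumption that actually makes the statement true, and to avoid it you end up bounding a different protocol. The paper treats the plane size $n$ as a fixed system parameter. The per-plane cost is then $C_{BFT}(n)=O(n^2)$ for PBFT, since every node's \texttt{PREPARE} and \texttt{COMMIT} is its own hop-by-hop broadcast, and $O(n)$ for HotStuff; both are constants. Counting each PL-to-GL message once, $k\cdot C_{BFT}(n)+O(k)+C_{BFT}(N_c)=O(k)=O(N)$.

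You instead aim for a bound uniform in $n$, and you reach it only by changing the protocol. The ``growing bundle'' replaces PBFT's all-to-all phases, per-hop vote aggregation is assumed for HotStuff, and the $f_i+1$-successor forwarding is dropped as a ``no-fault reading.'' That last step is not legitimate. The redundancy is an unconditional rule of the hop-by-hop primitive, because nodes cannot know that no fault is present. On its own it makes one dissemination cost $\Theta(n f_i)=\Theta(n^2)$ sends. This is harmless when $n$ is constant but fatal to a bound uniform in $n$. Likewise, the bundle circulation is not what OrbitBFT's PBFT does, and a Byzantine relay could simply omit votes from it. For OrbitBFT with PBFT as specified, the total is $\Theta(kn^2)=\Theta(Nn)$, which is $O(N)$ exactly when $n=O(1)$.

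Your inter-plane accounting then fails in exactly that regime. Counting hops, the $k$ PL-to-GL routes cost $O(k^2)$, which you absorb by assuming $k\le n$. But with $n$ fixed and $N$ growing, $k=N/n$ grows without bound. Even the paper's Starlink-I setting has $k=72>n=22$.

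Your fallback does not repair this. If each of $k$ routes crosses $\Theta(k)$ inter-plane links, inter-plane link usage is $\Theta(k^2)$. The $O(k)$ figure in the table comes only from counting each routed message as a single point-to-point transmission. If you adopt that end-to-end convention throughout and take $n$ constant, every problem disappears at once. Your argument then reduces to the paper's three-term sum, with no protocol modifications needed.
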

\begin{proof}
The total complexity is the sum of its two stages. Let $C_{BFT}(m)$ be the message complexity of the underlying BFT protocol for a group of $m$ nodes.

First, we analyze the complexity of a single intra-plane consensus instance. Our protocol uses a hop-by-hop broadcast model where each message transmission requires an acknowledgment (ACK). Therefore, a broadcast from one node to all others on the ring requires $O(n)$ messages for the proposal and $O(n)$ for the ACKs, for a total of $O(n)$.
\begin{itemize}
    \item When using PBFT, the \texttt{PRE-\texttt{PREPARE}} phase is one broadcast from the leader, costing $O(n)$. The \texttt{PREPARE} and \texttt{COMMIT} phases each involve all $n$ nodes broadcasting to each other, resulting in $n \times O(n) = O(n^2)$ complexity for each phase. Thus, the total complexity for one plane is $C_{BFT}(n) = O(n^2)$.
    \item When using HotStuff, communication is leader-centric. Each of the four phases typically involves a broadcast from the leader ($O(n)$) followed by replies from all nodes ($O(n)$). Thus, the total complexity for one plane is $C_{BFT}(n) = O(n)$.
\end{itemize}

In Stage 1 (Intra-plane Consensus), $k$ such instances run in parallel, leading to a total complexity of $k \cdot C_{BFT}(n)$.

In Stage 2 (Inter-plane Consensus), the process involves two parts. First, all $k$ PLs send their proposals to the GL, which has a complexity of $O(k)$. Second, the GL initiates a global consensus within a deterministically selected committee of PLs of size $N_c$. Since $N_c$ is a pre-configured small constant independent of $N$, the complexity of this step, $C_{\text{BFT}}(N_c)$, is $O(1)$.

The total complexity of \protocolname{} is the sum of these parts: $k \cdot C_{\text{BFT}}(n) + O(k) + C_{\text{BFT}}(N_c)$.
\begin{itemize}
    \item For PBFT, this sum is $k \cdot O(n^2) + O(k) + O(1)$. Since the plane size $n$ is a constant system parameter, $O(n^2)$ is effectively constant relative to $N$. Thus, the total complexity scales as $O(k)$. Since $k = N/n$, we have $O(k) = O(N)$.
    \item For HotStuff, this sum is $k \cdot O(n) + O(k) + O(1)$. Similarly, treating $n$ as a constant, this simplifies to $O(k) = O(N)$.
\end{itemize}
In both cases, the total complexity is $O(N)$. This completes the proof.
\end{proof}

While message complexity measures the total number of messages, we also note the protocol's efficiency at the physical link level. This communication model, designed for the ring topology, ensures that for any single broadcast event, a message traverses a given link a constant number of times, resulting in an $O(1)$ load on each link. In contrast, a naive broadcast where a central node must send individual messages to all other nodes would place an $O(n)$ load on the links directly connected to the sender. Our approach therefore prevents the formation of communication bottlenecks and fundamentally reduces the bandwidth consumption on each physical link.

This efficiency on stable intra-plane links is complemented by the protocol's optimization of the more critical inter-plane links.

\begin{theorem}[Inter-Plane Link Utilization]
The number of messages traversing inter-plane links during one consensus instance is $O(k)$.
\end{theorem}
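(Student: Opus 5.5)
The proof proceeds by classifying every message of one consensus instance according to whether it can ever cross an inter-plane ISL, and then bounding the inter-plane traffic stage by stage, in the same no-fault setting used for the message-complexity theorem. The key structural fact is that Stage~1 runs entirely inside each orbital plane. The hop-by-hop transmission of \texttt{PRE-PREPARE}/\texttt{PREPARE}/\texttt{COMMIT} for PBFT, or proposals and votes for HotStuff, is forwarded only to ring successors. Likewise, the final dissemination of the committed global block from each PL to its replicas uses only the intra-plane ring. Absent faults, every ACK arrives in time, so the Bypass Mechanism of Section~\ref{sec:bypass} is never triggered. Consequently Stage~1 and the final broadcast contribute \emph{zero} inter-plane link traversals. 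This is the point that distinguishes \protocolname{} from PBFT and HotStuff, where each of the $O(k^2n^2)$ plane-crossing pairs uses inter-plane links.

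Next, I would bound Stage~2, which has two parts. The first part is the collection step, in which each of the $k$ PLs sends a single proposal with its partial certificate to the GL. The second part is the committee BFT instance among $N_c$ PLs. Its message count is $C_{BFT}(N_c)$, which is $O(N_c^2)$ for PBFT and $O(N_c)$ for HotStuff. In either case this is $O(1)$, since $N_c$ is a constant, as already argued in the message-complexity theorem. The committee may also broadcast the result back to all $k$ PLs, which costs another $O(k)$ messages. Summing gives $k + O(1) + O(k) = O(k)$ inter-plane \emph{messages}.

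The main obstacle is converting messages into link traversals. A PL-to-GL message may need several inter-plane hops to reach a plane far away on the sphere, so naively the count is $O(k)$ messages times $O(k)$ hops, giving $O(k^2)$. I would resolve this in one of two ways. The first option is to adopt the paper's accounting, in which a routed inter-plane message is counted as one use of the inter-plane routing primitive; this matches the footnote to Table~\ref{table:comparision}, where the column counts the frequency of inter-plane link usage per instance. The second option is to organize the collection step by relaying along adjacent planes, where each PL merges its proposal with those received from the preceding plane and forwards the aggregate, so that each of the $k-1$ adjacent-plane boundaries is crossed $O(1)$ times. Under either option Stage~2 contributes $O(k)$. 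I would state the first accounting explicitly as the convention of the theorem and mention the second as the reason the bound is not merely an artifact of that convention. Adding the zero contribution from Stage~1 and the final broadcast then gives $O(k)$.
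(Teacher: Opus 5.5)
Your proof is correct and follows the paper's: all inter-plane traffic is confined to Stage~2, namely $O(k)$ PL-to-GL messages plus $O(1)$ committee messages, counted per routed message as in your first option; your no-bypass remark and the $O(k)$ result broadcast to the PLs are details the paper leaves implicit. The second, hop-counting option is only a side remark, but as you state it, it chain-relays just the collection step, so the result broadcast to all $k$ PLs would still cost $\Theta(k^2)$ inter-plane hops unless it is chain-relayed as well (and that variant also departs from the protocol as specified).
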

\begin{proof}
As established in the previous analysis, all inter-plane traffic occurs in Stage 2. It comprises two components: the $O(k)$ messages from all PLs sending their proposals to the GL, and the $O(1)$ messages from the BFT consensus within the constant-sized committee ($N_c$). The total utilization is therefore $O(k) + O(1) = O(k)$.
\end{proof}

This result demonstrates that our architecture confines the communication load on the dynamic and complex inter-plane network to scale linearly with the number of planes, thereby reducing the overhead caused by dynamic routing across the entire network.

\subsection{Robustness and Recovery Latency}
\label{sec:robustness}

While Section~\ref{sec:evaluation} focuses on steady-state performance, the system's robustness against leader failures is guaranteed by the deterministic View Change mechanism inherited from the underlying BFT protocols. Since the protocol behavior is state-deterministic, the recovery latency ($T_{recovery}$) can be analytically bounded.

The recovery latency consists of two components: the failure detection time and the view-change consensus time:
\begin{equation}
T_{recovery} = T_{timeout} + T_{vc}
\end{equation}
In our LEO settings, the timeout $T_{timeout}$ is typically configured to accommodate normal network jitter (e.g., $2 \times$ max RTT) to prevent false positives. The view-change consensus time $T_{vc}$ involves a standard BFT process overhead. For intra-plane recovery, considering a plane size of $n=22$ and a hop-by-hop delay, the total recovery time is strictly bounded. Analytically, if a leader fails, the system throughput temporarily drops to zero and autonomously restores within this bounded time window (approximately $O(n \times \text{link\_delay})$). This ensures that even under Byzantine leader failures, the service interruption is transient, satisfying the robustness requirement without indefinite stalling.

\section{Evaluation}
\label{sec:evaluation}

In this section, we conduct simulation experiments to evaluate the performance of our methods. Our evaluation aims to answer the following key questions: 
(1) How much does our communication protocol improve consensus performance on a single plane (Section~\ref{sec:evaluation:intra})?
(2) How can our \protocolname{} improve consensus protocols compared to other native consensus protocols (Section~\ref{sec:evaluation:inter})?

\subsection{Experiment Setup}
We use Hypatia\footnote{https://github.com/snkas/hypatia} simulator to conduct all experiments. 
Hypatia is a high-fidelity simulation platform for large-scale satellite networks, built upon NS-3\footnote{https://www.nsnam.org/}. It accurately models satellite orbital dynamics, dynamic network topology changes, and the communication characteristics of ISLs. 
For the LEO constellation, we utilize the publicly available parameters of the Starlink Phase I (Starlink-I) constellation. This configuration comprises 72 orbital planes, with 22 satellites on each plane. Unless otherwise specified, the default ISL bandwidth is set to 1~Mbps to facilitate the observation of network congestion.

To thoroughly evaluate our protocol, we've divided the comparison into two main categories: native (-native) and optimized (-opt). The native group comprises the standard, unmodified versions of PBFT (PBFT-native) and HotStuff (HotStuff-native). Conversely, the opt group features our proposed protocols, PBFT-opt and HotStuff-opt, which incorporate our novel hierarchical architecture and communication enhancements into their respective native counterparts.

\subsection{Intra-Plane Consensus}
\label{sec:evaluation:intra}

\subsubsection{Bandwidth Storm around Plane Leader}
To demonstrate the ``bandwidth storm'' problem visually, we set a target throughput equal to the peak throughput of the native protocols and force all consensus protocols to operate at this level. We then monitor the utilization of a hot ISL (ISL 0$\rightarrow$1, closer to the leader) and a cold ISL (ISL 10$\rightarrow$11, far away from the leader) over a 50-second interval.

As presented in Fig.~\ref{fig:exp6_1}, our optimized approach can accomplish the same consensus protocol with lower and more balanced bandwidth consumption across different links within the orbital plane. In subfigures (a) and (c), both PBFT-native and HotStuff-native (indicated by red areas) utilize the entire bandwidth of the hot ISL to achieve the target throughput. Conversely, as shown in subplots (b) and (d), their utilization of the cold ISL is less than 10\%. This shows the severe congestion and bandwidth imbalance issues of the native protocols. 
In contrast, our approaches (shown by the blue areas across all four subplots) consume minimal bandwidth, and this consumption is effectively distributed between the hot and cold ISLs. This demonstrates that our optimization not only decreases bandwidth utilization but also achieves load balancing. Furthermore, the unused bandwidth in our optimized protocols suggests a higher throughput potential, which we'll verify in the next experiment.

\begin{figure}[htbp]
\centering
\includegraphics[width=1.0\columnwidth]{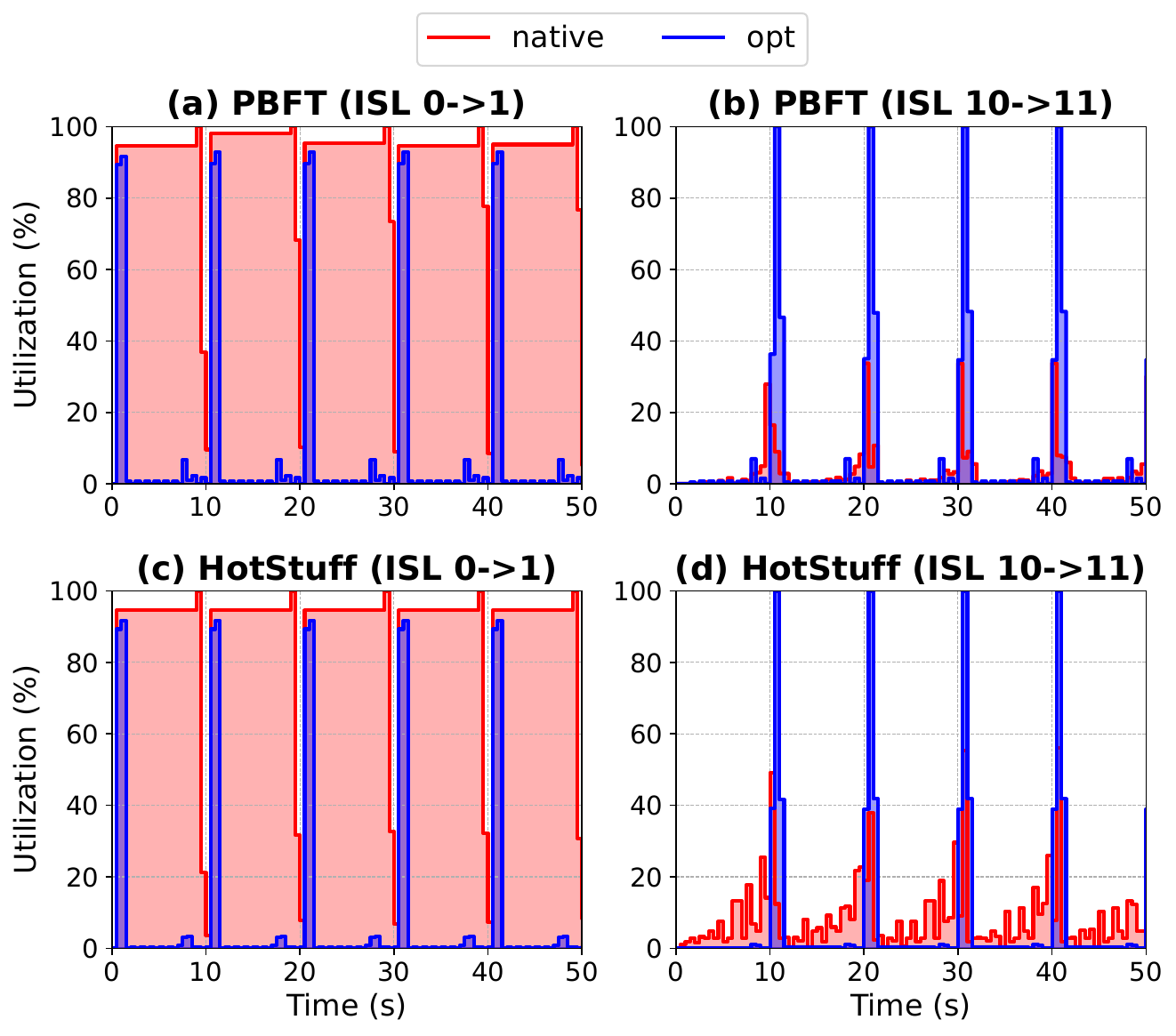}
\caption{ISL utilization at the target throughput. ``x $\rightarrow$ y'' means the ISL from x to y.}
\label{fig:exp6_1}
\end{figure}

\subsubsection{Latency and Throughput under Different Bandwidth}
The previous experiment has shown our optimization's superior bandwidth efficiency for a given workload. This experiment extends that finding by investigating how this efficiency translates into higher throughput when more bandwidth is available. We evaluate the latency and throughput of all four protocols under two bandwidth conditions: 1~Mbps and 10~Mbps. 

As shown in Fig.~\ref{fig:exp6_2a}, at 1~Mbps bandwidth, our approaches achieve a peak throughput of 87 TPS, approximately 10 times higher than the native protocols' 8.2 TPS. 
When bandwidth is increased to 10 Mbps, all protocols show improved performance. Notably, our optimized protocols reach a peak throughput exceeding 870 TPS, still far outperforming the native protocols. This proves that our communication optimization is effective across different bandwidth conditions. 
However, we also observe that as the request rate continues to increase, the system eventually saturates due to computational or protocol-level bottlenecks. This saturation leads to a sharp rise in latency and a plateau or drop in throughput. This observation underscores the necessity for a congestion control mechanism, which will be the subject of our next evaluation.

\begin{figure}[htbp]
\centering
\includegraphics[width=0.9\columnwidth]{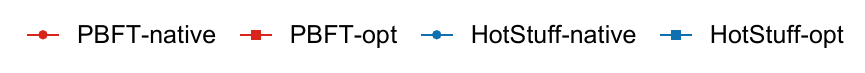}
\centering
\begin{subfigure}[b]{0.48\columnwidth}
    \includegraphics[width=\textwidth]{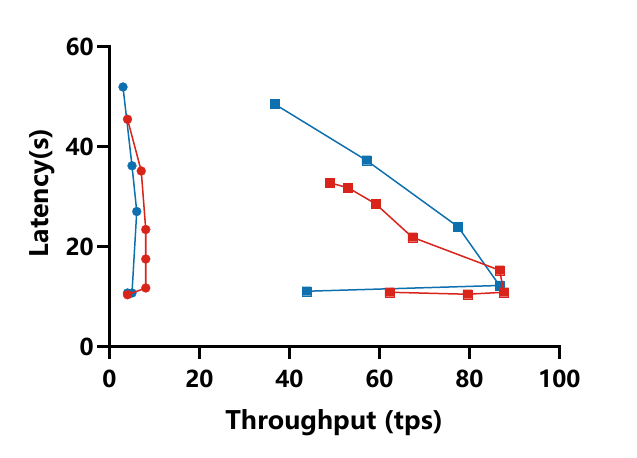}
    \caption{1~Mbps bandwidth}
    \label{fig:exp6_2a}
\end{subfigure}
\hfill
\begin{subfigure}[b]{0.48\columnwidth}
    \includegraphics[width=\textwidth]{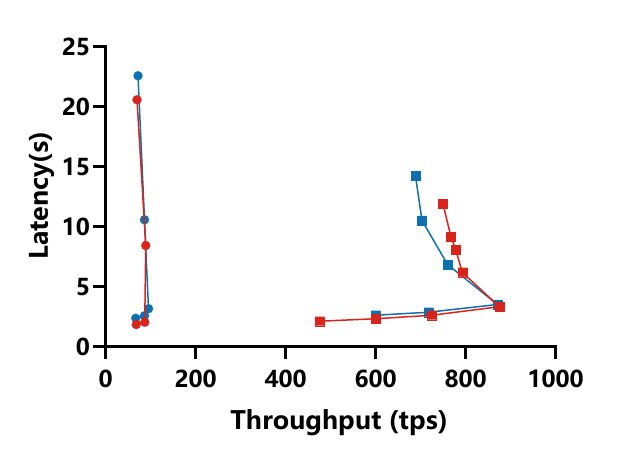}
    \caption{10~Mbps bandwidth}
    \label{fig:exp6_2b}
\end{subfigure}
\caption{The latency and throughput of four consensus protocols under different bandwidth settings.}
\label{fig:exp6_2}
\end{figure}

\subsubsection{Congestion Control}
This experiment validates the effectiveness of a dynamic window mechanism that limits
the total number of consensus instances in the protocol pipeline.
We focus on our optimized protocols, comparing their performance with and without the window mechanism enabled.

As illustrated in Fig.~\ref{fig:exp6_3}, without the window mechanism, the throughput of both our optimized protocols collapses after reaching its peak, a classic sign of congestion collapse. In contrast, when the window mechanism is enabled, the throughput stabilizes at the peak level, and latency is kept under control. This demonstrates that our window mechanism effectively prevents performance degradation from system overload, thus ensuring service stability.

\begin{figure}[htbp]
\centering
\includegraphics[width=0.9\columnwidth]{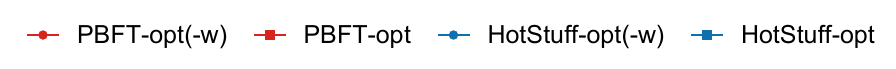}
\centering
\begin{subfigure}[b]{0.48\columnwidth}
    \includegraphics[width=\textwidth]{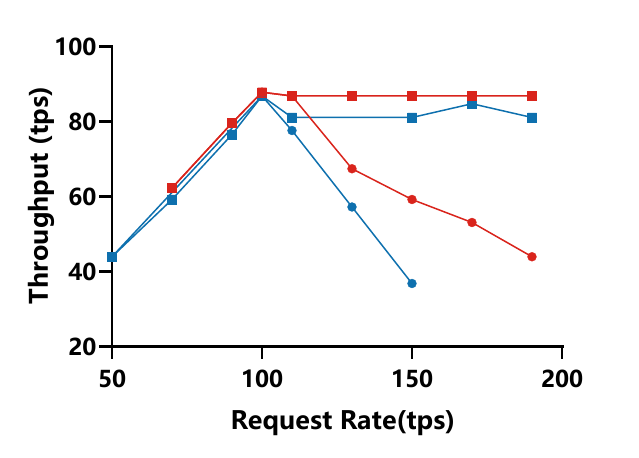}
    \caption{Throughput vs. request rate}
    \label{fig:exp6_3a}
\end{subfigure}
\hfill
\begin{subfigure}[b]{0.48\columnwidth}
    \includegraphics[width=\textwidth]{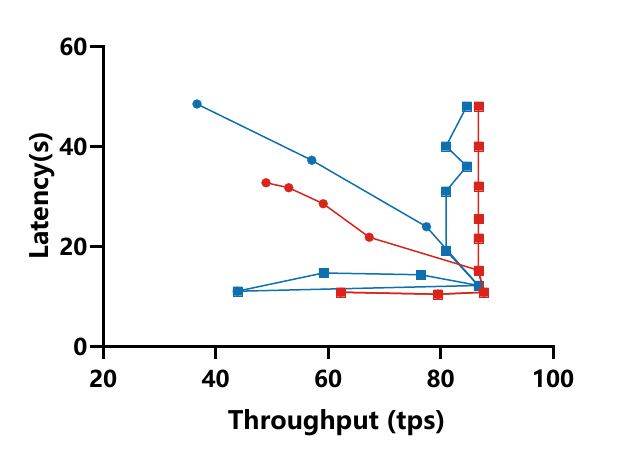}
    \caption{Latency vs. throughput}
    \label{fig:exp6_3b}
\end{subfigure}
\caption{Performance of \protocolname{}-based protocols with and without the dynamic window mechanism. PBFT-opt(-w) and HotStuff-opt(-w) denote PBFT-opt and HotStuff-opt without a window mechanism}
\label{fig:exp6_3}
\end{figure}

\subsubsection{Scalability}

To evaluate the scalability of our communication optimization, we set the number of satellites in the intra-plane consensus group from 7 to 22 and measure the latency and throughput of each consensus protocol.
As shown in Fig.~\ref{fig:exp6_4a}, the native protocols exhibit poor scalability. As the plane size increases from 4 to 22, their maximum throughput drops sharply by 84\%. In contrast, our optimized protocols demonstrate effective scalability, with throughput decreasing by a mere 15\% under the same conditions. This difference is attributed to the message complexity: the $O(N^2)$ message overhead of the native protocols leads to rapidly escalating network contention as $N$ grows, whereas the $O(N)$ complexity of our scheme effectively mitigates this effect. Furthermore, Fig.~\ref{fig:exp6_4b} shows the latency of four consensus protocols. Although the hop-by-hop mechanism introduces more transmission hops, it eliminates the severe network congestion present in the native protocols. Consequently, our optimized protocols not only provide higher throughput but also achieve lower latency, reducing it by up to 27\% compared to the native protocols. These results confirm that our optimizations provide an effective and scalable solution for intra-plane consensus, delivering higher throughput at a lower latency compared to native protocols.

\begin{figure}[htbp]
\centering
\includegraphics[width=0.9\columnwidth]{figures/title1.pdf}
\centering
\begin{subfigure}[b]{0.48\columnwidth}
    \includegraphics[width=\textwidth]{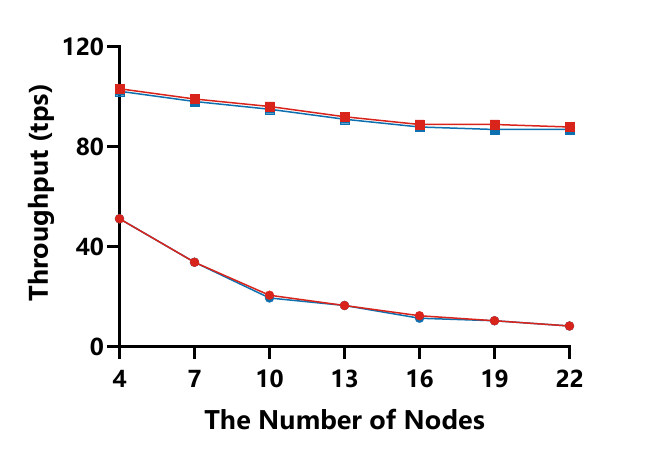}
    \caption{Throughput}
    \label{fig:exp6_4a}
\end{subfigure}
\hfill
\begin{subfigure}[b]{0.48\columnwidth}
    \includegraphics[width=\textwidth]{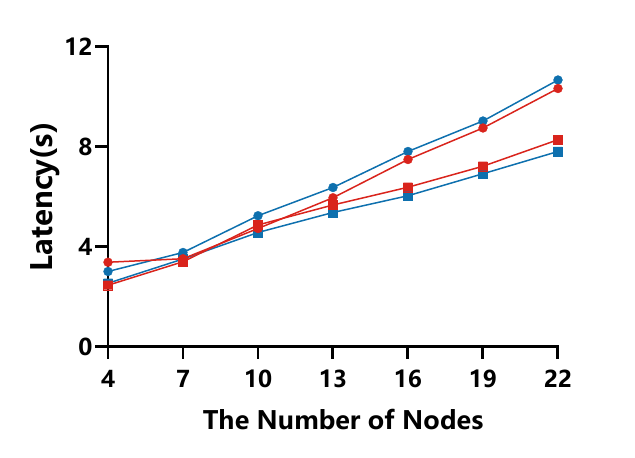}
    \caption{Latency}
    \label{fig:exp6_4b}
\end{subfigure}
\caption{The throughput and latency of intra-plane consensus vs. plane size}
\label{fig:exp6_4}
\end{figure}

\subsection{\protocolname{}}
\label{sec:evaluation:inter}
Having validated the effectiveness of the intra-plane communication optimizations, we evaluate the performance of the complete two-stage architecture across the entire constellation. The primary goal is to verify the advantages of our two-stage protocol over native protocols in terms of scalability and overall performance in a large-scale network environment. 

\subsubsection{LEO Constellation Size}
We progressively increase the total number of nodes participating in consensus and measure the average latency.
As illustrated in Fig.~\ref{fig:exp6_5}, the latency of all protocols increases with the number of nodes, but at different rates. The latency of the native protocols grows much faster than that of our optimized protocols, with PBFT-native exhibiting a particularly rapid increase, faster than HotStuff-native. \protocolname{} architecture successfully decomposes the complexity of global consensus, mitigating the latency pressure from network scaling and showcasing effective scalability.

\subsubsection{Starlink-I Constellation}
We evaluate the latency and throughput of each consensus protocol in the full 1584-node constellation, i.e., the full Starlink-I constellation.
As presented in Fig.~\ref{fig:exp6_6}, the \protocolname{}-based consensus protocol has higher throughput at the same latency, compared to native consensus protocols. On the one hand, PBFT-native did not complete its run within 200 seconds, our set maximum consensus latency, which would require 30 hours of simulation time. On the other hand, HotStuff-native's peak throughput is less than 5 TPS. In contrast, as throughput increases, the latency of the \protocolname{}-based consensus protocol grows very slowly, which confirms that \protocolname{} is an effective approach for achieving high-performance, scalable BFT consensus in a large-scale LEO constellation.

\begin{figure}[htbp]
\centering
\includegraphics[width=0.9\columnwidth]{figures/title1.pdf}
\centering
\begin{subfigure}[b]{0.48\columnwidth}
    \centering
    \includegraphics[width=\textwidth]{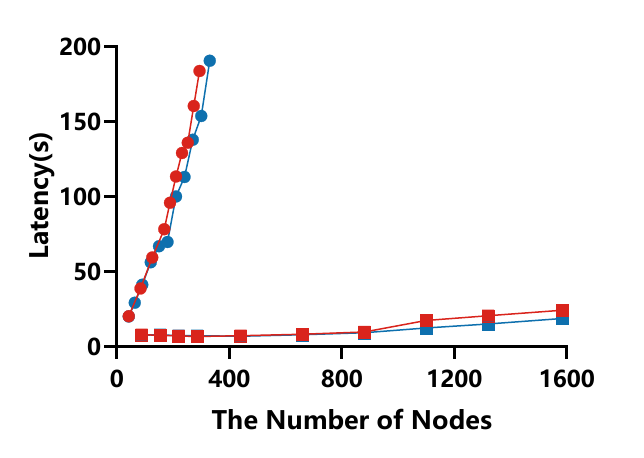}
    \caption{Latency vs. network size}
    \label{fig:exp6_5}
\end{subfigure}
\hfill
\begin{subfigure}[b]{0.48\columnwidth}
    \centering
    \includegraphics[width=\textwidth]{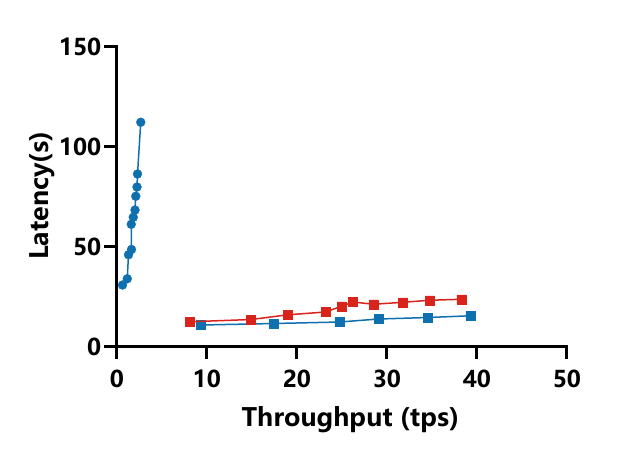}
    \caption{Latency vs. throughput}
    \label{fig:exp6_6}
\end{subfigure}
\caption{Consensus latency and throughput in the full-scale constellation}
\label{fig:combined_6_5_6_6}
\end{figure}

\section{Conclusion and Future Work}
\label{sec:conclusion}
This paper presents OrbitBFT, a hierarchical BFT protocol for large-scale LEO satellite constellations. It decomposes consensus into intra- and inter-plane phases, exploiting stable intra-plane topology to reduce communication overhead and improve scalability. OrbitBFT includes a Byzantine-resilient bypass mechanism and a congestion-aware hop-by-hop scheme to handle unreliable links and bandwidth limits. It also integrates optimized PBFT and HotStuff, achieving linear message complexity with improved throughput and latency. Simulations based on real-world Starlink parameters demonstrate its effectiveness and scalability. Future work focuses on enhancing robustness against localized Byzantine partitions, where colluding adversaries form connected subgraphs that challenge quorum formation and routing adaptability.



\bibliographystyle{IEEEtran}
\bibliography{ref} 

\end{document}